\newtheorem{proposition}{Proposition}
\definecolor{mycolor1}{rgb}{0, 0.6039, 0.9804}
\definecolor{mycolor2}{rgb}{0.9294, 0.3686, 0.5765}
\newcommand{\tk}{\tilde{k}}
\newcommand{\tx}{\tilde{x}}
\newcommand{\tSigma}{\tilde{\Sigma}}
\newcommand{\bbR}{\mathbb{R}}
\newcommand{\bbS}{\mathbb{S}}
\newcommand{\calO}{\mathcal{O}}
\newcommand{\calS}{\mathcal{S}}
\newcommand{\rmd}{\mathrm{d}}
\newcommand{\rmN}{\mathrm{N}}
\newcommand{\trans}{{\text{\tiny\sf T}}}
\DeclareMathOperator{\trace}{trace}
\DeclareMathOperator{\proj}{proj}
\DeclareMathOperator*{\argmin}{argmin}
\DeclarePairedDelimiter\prn{(}{)}
\DeclarePairedDelimiter\set{\{}{\}}
\begin{document}

\begin{frontmatter}



\title{Modelling the discretization error of initial value
problems using the Wishart distribution}

\author[label0,label1]{Naoki Marumo}
\author[label1,label2]{Takeru Matsuda}
\author[label3]{Yuto Miyatake\corref{cor1}}
\affiliation[label0]{organization={NTT Communication Science Laboratories},
state={Kyoto},
country={Japan}
}
\affiliation[label1]{organization={Graduate School of Information Science and Technology, The University of Tokyo},
state={Tokyo},
country={Japan}
}
\affiliation[label2]{organization={RIKEN Center for Brain Science},
state={Saitama},
country={Japan}}
\affiliation[label3]{organization={Cybermedia Center, Osaka University},
            state={Osaka},
            country={Japan}
}
\ead{yuto.miyatake.cmc@osaka-u.ac.jp}



\begin{abstract}

This paper presents a new discretization error quantification method for the numerical integration of ordinary differential equations.
The error is modelled by using the Wishart distribution, which enables us to capture the correlation between variables.
Error quantification is achieved by solving an optimization problem under the order constraints for the covariance matrices.
An algorithm for the optimization problem is also established in a slightly broader context.

\end{abstract}



\begin{keyword}
discretization error \sep ODEs \sep Wishart distribution



\end{keyword}

\end{frontmatter}


\section{Introduction}
\label{intro}

In the numerical analysis of differential equations, the error behaviour induced by discretization is of significant importance.
Bounds on the error using constants and parameters, such as time step size, are crucial, and typically obtained theoretically. 
However, demands for quantifying the discretization error and reliability of numerical solutions have surged recently:
sufficiently accurate numerical results are not always achievable, especially for large-scale problems, chaotic systems, and long-time integration;
in the contexts of image processing and machine learning, high accuracy is not necessarily required.
Since overly rough computation is not acceptable for both scenarios, it is vital to evaluate the computational results in a quantitative manner to guarantee or comprehend the reliability of the computation.

Recently, methods quantifying error and reliability using probabilistic or statistical discussions emerged, including ODE filters (and smoothers)~\cite{ks20,tk19,ss19,ts21} and perturbative numerical methods~\cite{ag20,cg17,ls19,ls22}.
These have been studied within a relatively new research area known as probabilistic numerics~\cite{ho22}. 
These algorithms themselves possess varying levels of probabilistic or statistical characteristics.

The present authors have proposed non-probabilistic algorithms for quantification, although they are grounded in certain probabilistic or statistical arguments~\cite{mm21,mm22}. 
These can only be applied in inverse problem settings, but they prove to be quite efficient as the algorithms utilize observation data as prior information. 
However, these methods focus on a single specific variable, neglecting the correlations between variables.
While focusing on a specific variable renders the resulting algorithms efficient, this strong assumption appears overly restrictive.

In this paper, we shall generalize our previous method~\cite{mm21}, which is based on isotonic regression, to handle multiple variables simultaneously. 
The key idea is to model the square of the difference between the numerical approximation and observation at each discrete time using the Wishart distribution, whereas previous studies have used the chi-square distribution. 
We shall also design an algorithm to solve the extended isotonic regression problem efficiently.
Our algorithm resembles the one proposed in~\cite{cd91}. 
However, in~\cite{cd91}, there is some ambiguity in the description of the algorithm. 
To avoid ambiguity, we shall detail the algorithm's derivation. 
We assert that the main novelties lie in modelling of the discretization error using the Wishart distribution, and showing how this modelling and algorithm provide information on the reliability of numerical approximations.
Applications to inverse problems are not discussed in this paper.

\section{Quantifying the numerical error using Wishart distribution}

\subsection{Backgrounds and problem settings}

The specific background of this paper originates from inverse problems.
Consider the initial value problem
\begin{equation}
    \frac{\rmd}{\rmd t} x(t;\theta)
    =
    f(x(t;\theta),\theta), \quad
    x(0;\theta) = x_0 \in V
\end{equation}
with unknown parameters $\theta \in \Theta$, where $V$ describes an appropriate space to which the solution $x(t;\theta)$ belong, and $f:V \times \Theta \to V$ is assumed to be sufficiently regular.
Some variables of the initial state might be included in the unknowns.
For simplicity, we assume no modelling uncertainties, indicating the existence of a true parameter $\theta^\ast \in \Theta$.

Assume that a time series of noisy observations is obtained at $t=t_1,\dots,t_N$ ($0\leq t_1<\dots<t_N$).
The solution operator $\calS_N:\Theta \to V^N$ is defined by $\calS_N(\theta) = [x(t_1;\theta)^\trans,\dots,x(t_N;\theta)^\trans]$.
The observation operator $\calO: V\to\bbR^p$ is assumed to be linear,
and the observation noise is assumed to be a $p$-dimensional Gaussian vector with mean zero and covariance matrix $\Gamma \in \bbR^{p \times p}$.
The observation at $t=t_i$ is denoted by $y_i$:
$y_i = \calO ( x(t_i;\theta^\ast)) + e_i$,
where $e_i \sim \rmN_p (0,\Gamma)$.
This operator is readily generalized to $\calO_i:V^N \to \bbR^d$ such that $\calO_i \circ \calS_N (\theta) = \calO (x(t_i;\theta))$.
The maximum likelihood estimate of $\theta$ is then given by
\begin{equation*}
    \hat{\theta}_{\text{ML}}
    =
    \argmin_{\theta \in \Theta}
    \sum_{n=1}^N (y_i - \calO_i \circ \calS_N (\theta))^\trans \Gamma^{-1}
    (y_i - \calO_i \circ \calS_N (\theta)).
\end{equation*}
However, the true solution map $\calS_N$ is unavailable in general.
Thus, we usually consider the quasi-maximum likelihood estimate of $\theta$ using an approximate solution operator $\tilde{\calS}_N:\Theta \to V^N$:
\begin{equation*}
    \hat{\theta}_{\text{QML}}
    =
    \argmin_{\theta \in \Theta}
    \sum_{n=1}^N (y_i - \calO_i \circ \tilde{\calS}_N (\theta))^\trans  \Gamma^{-1}
    (y_i - \calO_i \circ \tilde{\calS}_N (\theta)).
\end{equation*}
Typically, the approximate operator $\tilde{\calS}_N:\Theta\to V^N$ is defined by $\tilde{\calS}_N(\theta) = [\tx_1(\theta)^\trans,\dots,\tx_N(\theta)^\trans]$, where 
$\tx_i (\theta)$ is a numerical approximation of $x(t_i; \theta)$ obtained by using some numerical integrators such as the Runge--Kutta method.

The quasi-maximum likelihood estimate may have non-negligible bias:
if the approximation $\tilde{S}_N$ is not accurate enough compared with the scale of the observation noise, the bias may be significant and cannot be disregarded (see Example~2.1 of \cite{mm21}).
A potential solution is to introduce a model connecting the observation and numerical approximation: $y_i  = \calO_i \circ \tilde{\calS}_N(\theta^\ast) + \xi_i$,
where 
$\xi_i \sim \rmN_p (0,\Gamma + \Sigma_i)$ and $\Sigma_i$ specifies the scale of the discretization error, that is, $x(t_i;\theta) - x_i(\theta)$.
This model leads to the formulation
\begin{equation*}
    \hat{\theta}
    =
    \argmin_{\theta \in \Theta}
    \sum_{i=1}^N (y_i - \calO_i \circ \tilde{\calS}_N (\theta)) ^\trans (\Gamma+\Sigma_i)^{-1}
    (y_i - \calO_i \circ \tilde{\calS}_N (\theta)).
\end{equation*}
Here, the main idea is that considering the covariance matrix $\Sigma_i$ could potentially yield a less biased estimator and provide uncertainty quantification of the obtained estimate in a more suitable manner~\cite{mm21}. 
We note that a similar approach of adding discretization error as a covariance matrix to the observation model was introduced in the context of Bayesian inverse problems~\cite{co16}.

Estimation of $\Sigma_i$ needs to be addressed.
In the previous papers~\cite{mm21,mm22} by the present authors, both $\Gamma$ and $\Sigma_i$'s are assumed to be diagonal and an iterative method is proposed for estimating $\theta$ and $\Sigma_i$'s.
Starting with an initial guess $\theta^{(0)}$, $\Sigma_i^{(0)}$ is estimated.
Then, with $\Sigma_i^{(0)}$ fixed, $\theta^{(0)}$ is updated to $\theta^{(1)}$, and with $\theta^{(1)}$ fixed, $\Sigma_i^{(0)}$ is updated to $\Sigma_i^{(1)}$.
This iterative procedure continues until some convergence criteria are met.
In estimation of $\Sigma_i$'s, isotonic regression techniques are employed. 
Specifically, $\Sigma_i$'s are updated by solving a certain optimization problem under the constraint that  
each diagonal element of $\Sigma_i$'s is (piecewise) monotonically increasing with respect to $i$.
This constraint reflects the observation that for most problems and numerical integrators, the discretization error gets accumulated over time.
The diagonality assumption makes it possible to solve the optimization problem for $\Sigma_i$'s exactly and efficiently.
However, these approaches overlook correlations between variables.

\subsection{A new model}

In this paper, we propose a new model for the discretization error covariance matrices $\Sigma_i$'s that is free from the diagonality assumption, which can capture (spatial) correlation between discretization error of each variable with the off-diagonal elements of $\Sigma_i$.
For simplicity, we assume the true parameter $\theta^\ast$ is available and focus solely on estimating $\Sigma_i$'s.
We propose a model in which $0 \preceq \Sigma_1 \preceq \cdots \preceq \Sigma_N$, where, for symmetric positive semi-definite matrices $X$ and $Y$, $X\preceq Y$ means that $Y-X$ is symmetric positive semi-definite.
Note that each $\Sigma_i$ is not assumed to be diagonal.
We also propose a methodology for updating $\Sigma_i$ in the next section.

To ensure that the resulting optimization problem is well-defined,
we assume that $\Sigma_i$ is piecewise constant.
Specifically, partitioning $N$ into $n$ parts, we assume that 
\begin{align}
    \label{eq:partition}
    \Sigma_1 = \dots= \Sigma_{k_1}, \quad 
    \Sigma_{k_1+1} = \dots = \Sigma_{k_1 + k_2},\quad 
    \dots,\quad 
    \Sigma_{k_1+\dots + k_{n-1} + 1} = \dots = \Sigma_{k_1+\dots + k_{n}} (= \Sigma_N),
\end{align}
where $k_1+\dots + k_{n} = N$.
We will write $k_1 + \dots + k_i = \tk_{i}$ and $\Sigma_{\tk_i} = \tSigma_i$.
Since $\xi_{\tk_{i-1} +1}, \dots, \xi_{\tk_i} \sim \mathrm{N}_p(0,\Gamma + \tSigma_i)$,
\begin{equation*}
    \sum_{j=1}^{k_i} \xi_{\tk_{i-1} +j} \xi_{\tk_{i-1} +j}^\trans \sim W_p (k_i,\Gamma + \tSigma_i),
\end{equation*}
where $W_p(k,V)$ denotes the Wishart distribution with $k$ degrees of freedom for $p\times p$ matrices. 
This model leads to the following formulation using the new notation $Q_i = \Gamma + \tilde{\Sigma}_i$:
\begin{align}
    \min_{Q \in (\bbS_{+}^p)^n} \sum_{i=1}^n k_i (-\log \det(Q_i^{-1}) + \trace (S_i Q_i^{-1})) 
    \quad  \text{s.t. } \Gamma\preceq Q_1\preceq \cdots \preceq Q_n, \label{form2}
\end{align}
where $\bbS_{+}^p$ is the set of symmetric positive semi-definite matrices of size $p\times p$, $(\bbS_{+}^p)^n$ is its $n$-tuple and
$ S_i = \frac{1}{k_i} \sum_{j=1}^{k_i} \xi_{\tk_{i-1} +j} \xi_{\tk_{i-1} +j}^\trans$.
Here, inside the summation is the negative log-likelihood of observation $S_i$ given the covariance $Q_i$: the likelihood is proportional to $\exp \big( -\frac{k_i}{2} \trace (Q_i^{-1}S_i)\big) / \det (Q_i)^{k_i/2}$.
We assume that the covariance matrix $\Gamma$ is positive definite, and then the matrix $Q_i$ is invertible as long as $Q_i \succeq \Gamma$.

\section{Algorithm}

We develop an algorithm for solving the problem \eqref{form2} in a slightly broader context.

Let $G = (V,E)$ be a directed acyclic graph (DAG)\footnote{Even if $G$ has cycles, we can decompose $G$ into strongly connected components to reduce the problem to the case where $G$ is a DAG.} with vertex set $V \coloneqq \{0,1,\dots,n\}$ and edge set $E$.
We assume that there exists a path from vertex $0 \in V$ to every other vertex.
We are now concerned with the problem
\begin{equation}
    \label{prob:form1}
    \min_{ Q \in (\bbS_{+}^p)^V}
    \sum_{i\in V \setminus \set{0}}
    k_i \prn*{
        -\log \det(Q_i^{-1}) + \trace (S_i Q_i^{-1})
    },
    \quad 
    \text{s.t. }
    Q_0 = \Gamma
    \text{ and }
    Q_i \preceq Q_j
    \text{ for all }
    (i,j)\in E,
\end{equation}
which generalizes the problem \eqref{form2}.

Let us define the functions $f,\iota_{\bbS_+^p}:\bbS^p \to \bbR \cup \{ +\infty\}$ by
\begin{align*}
    f(X) = \begin{cases}
        -\log \det(X) & \text{if } X \succ O,\\
        +\infty & \text{otherwise},
    \end{cases}
    \quad \iota_{\bbS_+^p} (X)
    = \begin{cases}
        0 & \text{if } X \succeq O,\\
        +\infty & \text{otherwise}.
    \end{cases}
\end{align*}
Let $(b_{ie}) \in \bbR^{V \times E}$ be the incidence matrix of $G$: $b_{ie} = 1$ if $e = (i,j)$ for some $j \in V$, $b_{ie} = -1$ if $e = (j,i)$, and $b_{ie} = 0$ otherwise.
The variable transformation $P_i := Q_i^{-1}$ leads to the equivalent form for \eqref{prob:form1}:
\begin{equation}
    \label{prob:form3}
    \min_{ P \in (\bbS^p)^V}
    \sum_{i\in V \setminus \set{0}}
    k_i \prn*{
        f(P_i) + \trace(S_i P_i)
    } 
    + 
    \sum_{e \in E} 
    \iota_{\bbS_+^p} \bigg( \sum_{i \in V} b_{ie} P_i \bigg),
    \quad 
    text{s.t. }
    P_0 = \Gamma^{-1}.
\end{equation}

\subsection{Dual Problem}
The Fenchel dual of the problem \eqref{prob:form3} is 
\begin{equation}
    \label{prob:form4}
    \max_{Y\in (\bbS_+^p)^E} 
    - \sum_{e \in E} \trace \prn*{b_{0e} Y_e \Gamma^{-1}} 
    - \sum_{i\in V \setminus \set{0}}
    k_i f^\ast 
    \bigg(
        \frac{1}{k_i} \sum_{e \in E} b_{ie} Y_e - S_i
    \bigg),
\end{equation}
where $f^\ast \colon \bbS^p \to \bbR \cup \{ +\infty\}$ is the convex conjugate of $f$:
\begin{equation*}
    f^\ast (X) = 
    \begin{cases}
        -\log \det(-X) - p & \text{if } X\prec O, \\
        +\infty & \text{otherwise}.
    \end{cases}
\end{equation*}

One of the Karush--Kuhn--Tucker (KKT) conditions for the problems \eqref{prob:form3} and \eqref{prob:form4} is 
\begin{align*}
    \frac{1}{k_i} \sum_{e \in E} b_{ie} Y_e - S_i
    = \nabla f(P_i)
    = - P_i^{-1}
    \text{ for all }
    i \in V \setminus \set{0}.
\end{align*}
Therefore, given the optimal solution $\hat{Y}$ for the dual problem \eqref{prob:form4}, we can obtain the optimal solution $\hat Q$ for \eqref{prob:form1} by  
\begin{equation}
    \hat Q_i
    =
    S_i - \frac{1}{k_i} \sum_{e \in E} b_{ie} \hat Y_e.
\end{equation}
See \cite[Section~31]{r70} for more mathematical details on the Fenchel dual problem and KKT conditions.

\subsection{Dual block coordinate ascent algorithm}
We propose to apply a block coordinate ascent algorithm to the dual problem \eqref{prob:form4}:
for $e\in E$, all variables $Y_{e^\prime}$ ($e^\prime \neq e$) are fixed, and optimize only for $Y_e$, and conduct this procedure repeatedly while swapping edges until some convergence criteria are met.

Before discussing the optimization method for $Y_e$, we describe how to get a feasible starting point.
Note that $Y \in (\bbS_+^p)^E$ is feasible for the problem~\eqref{prob:form4} (i.e., the objective function value is finite) if and only if 
\begin{align}
    \label{eq:constraint_dual}
    \sum_{e \in E} b_{ie} Y_e
    \prec
    k_i S_i
\end{align}
for all $i \in V \setminus \set{0}$.
Therefore, the solution $Y$ such that $Y_e = O$ for all $e \in E$ is feasible if $S_i \succ O$ for all $i \in V \setminus \set{0}$.
Otherwise, there exists $i' \in V \setminus \set{0}$ that violates the constraint~\eqref{eq:constraint_dual} for $i = i'$.
Then, one can pick up an arbitrary path from vertex $0$ to vertex $i$ on the graph $G$ and update $Y_e \gets Y_e + \epsilon I$ for all $e \in E$ on the path, where $\epsilon > 0$ is an arbitrary constant.
Note that such a path exists by the assumption and can be found by tracing edges backward from vertex $i'$ to vertex $0$.
This solution update results in $\sum_{e \in E} b_{i'e} Y_e = - \epsilon I$,
thus making the constraint~\eqref{eq:constraint_dual} for $i = i'$ satisfied since $- \epsilon I \prec k_{i'} S_{i'}$.
The update does not change the left-hand side of the constraint~\eqref{eq:constraint_dual} for $i \neq i'$ because $\epsilon I$ and $- \epsilon I$ cancel out for vertex $i$ between vertices $0$ and $i'$.
Accordingly, we can obtain a feasible solution by repeating such updates until no more $i'$ violates the constraint.
We start the dual block coordinate ascent algorithm with the obtained feasible solution.

Now, let us consider optimizing $Y_e$ for $e=(i,j)$ such that $i \neq 0$ and $j \neq 0$.
Let
\begin{align}
    \label{eq:def_A_B}
    A := S_i - \frac{1}{k_i} \sum_{e' \in E \setminus \{e\}} b_{ie'} Y_{e'},
    \quad B := S_j - \frac{1}{k_j} \sum_{e' \in E \setminus \{e\}} b_{je'} Y_{e'}.
\end{align}
Then, if we fix all $Y_{e^\prime}$ ($e^\prime \neq e$), the subproblem for $Y_e$ can be written as
\begin{equation}
    \label{sub_opt_problem}
    \max _{Y_e \in \bbS_+^p} k_i \log \det \Big( A - \frac{1}{k_i}Y_e \Big) + k_j \log \det \Big(B + \frac{1}{k_j} Y_e\Big),
    \quad 
    \text{s.t. $- k_j B \prec Y_e \prec k_i A$.}
\end{equation}
Note that if we start with a feasible solution $Y$ for the problem~\eqref{prob:form4}, the feasibility of the subproblem is preserved throughout the optimization procedure.
Note also that if the problem \eqref{sub_opt_problem} is feasible, $A\succ O$ must hold.

\begin{proposition}
    \label{prop:subproblem_optimum_1}
    If the problem \eqref{sub_opt_problem} is feasible, its optimal solution is given by
    \begin{equation*}
        Y_e = \frac{k_ik_j}{k_i+k_j} A^{1/2} \proj_{\bbS_+^p} \Big( I - A^{-1/2} B A^{-1/2} \Big) A^{1/2},
    \end{equation*}
    where $\proj_{\bbS_+^p} (\cdot)$ is a projection onto the set $\bbS_+^p$.
\end{proposition}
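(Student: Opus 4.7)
The plan is to whiten the subproblem, derive KKT conditions, and check that the stated candidate satisfies them.

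First I would observe that feasibility of \eqref{sub_opt_problem} forces $A \succ O$, so the substitution $Y_e = A^{1/2} W A^{1/2}$ is a bijection on $\bbS_+^p$. Writing $\tilde B \coloneqq A^{-1/2} B A^{-1/2}$ and pulling $A^{1/2}$ out of each determinant reduces the objective, up to an additive constant, to
\begin{equation*}
    k_i \log\det\Bigl(I - \tfrac{1}{k_i}W\Bigr) + k_j \log\det\Bigl(\tilde B + \tfrac{1}{k_j}W\Bigr),
\end{equation*}
with the implicit constraints $W \prec k_i I$ and $k_j \tilde B + W \succ O$ coming from the domain of $\log\det$. Each term is a concave composition of $\log\det$ with an affine map, so the objective is concave and the only explicit constraint is $W \succeq O$.

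Next I would write KKT conditions. Introducing a multiplier $\Lambda \in \bbS_+^p$ for $W \succeq O$, stationarity becomes $(I - \tfrac{1}{k_i}W)^{-1} - (\tilde B + \tfrac{1}{k_j}W)^{-1} = \Lambda$ and complementary slackness reads $\Lambda W = O$. Setting $\Lambda = O$ tentatively collapses stationarity to $I - \tfrac{1}{k_i}W = \tilde B + \tfrac{1}{k_j}W$, whose unique solution is $W = \tfrac{k_i k_j}{k_i+k_j}(I - \tilde B)$. This motivates the candidate $W^\ast \coloneqq \tfrac{k_i k_j}{k_i+k_j}\proj_{\bbS_+^p}(I - \tilde B)$, and it remains to verify KKT.

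For the verification I would diagonalize $I - \tilde B = U D U^\trans$ with $D = \diag(d_1,\dots,d_p)$ and split $D = D_+ + D_-$ by sign, so $\proj_{\bbS_+^p}(I - \tilde B) = U D_+ U^\trans$. Plugging $W^\ast$ in, both matrices appearing in the gradient become diagonal in the basis $U$, and KKT reduces to scalar checks at each eigenvalue. On indices with $d_\ell \ge 0$ the two diagonal entries coincide (both equal $1 - \tfrac{k_j}{k_i+k_j}d_\ell$), giving $\Lambda_{\ell\ell} = 0$ exactly where $W^\ast_{\ell\ell}$ is nonzero. On indices with $d_\ell < 0$ one has $W^\ast_{\ell\ell} = 0$ and $\Lambda_{\ell\ell} = -d_\ell/(1 - d_\ell) > 0$. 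Hence $\Lambda \succeq O$ and $\Lambda W^\ast = O$; concavity promotes KKT to global optimality, and undoing the whitening yields the stated formula.

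The main obstacle I anticipate is confirming the strict feasibility of $W^\ast$, i.e.\ that $1 - \tfrac{k_j}{k_i+k_j}d_\ell > 0$ for all $\ell$, equivalently $k_j \tilde B + k_i I \succ O$; without this the intermediate matrices might leave the domain of $\log\det$ and the eigenvalue computation above would be void. I plan to extract this bound from the assumed feasibility of \eqref{sub_opt_problem}: any feasible $W^0$ satisfies $W^0 \prec k_i I$ and $k_j \tilde B + W^0 \succ O$, which chain together to give $k_j \tilde B \succ -W^0 \succ -k_i I$ as required.
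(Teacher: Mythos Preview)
Your proof is correct and follows essentially the same route as the paper's. Both arguments whiten via $W = A^{-1/2} Y_e A^{-1/2}$, propose the candidate $\frac{k_ik_j}{k_i+k_j}\proj_{\bbS_+^p}(I-\tilde B)$, and verify optimality by diagonalizing $\tilde B$ (equivalently $I-\tilde B$) and checking first-order conditions eigenvalue by eigenvalue. The only cosmetic difference is that you phrase the optimality check as KKT with a multiplier $\Lambda$, whereas the paper states the equivalent sufficient condition $\nabla g(X^*)\preceq O$ together with $\langle \nabla g(X^*), X^*\rangle = 0$; with $\Lambda = -\nabla g(X^*)$ these are identical. Your treatment of the strict-feasibility issue (deriving $k_j\tilde B + k_i I \succ O$ from the existence of a feasible point) matches the paper's use of the assumption $-k_jC \prec k_iI$.
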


\begin{proof}
Let $X:= A^{-1/2}Y_e A^{-1/2}$ and $C:=A^{-1/2}BA^{-1/2}$.
Then the problem~\eqref{sub_opt_problem} is written as
\begin{align*}
    \max_{X\in\bbS_+^p}
    \Big\{ g(X) := k_i \log \det \Big(I-\frac{1}{k_i}X \Big) + k_j \log \det \Big(C + \frac{1}{k_j} X \Big)\Big\}, \quad
    \text{s.t. $- k_j C \prec X \prec k_i I$}
\end{align*}
up to a constant term.
We will show that the solution 
$X^* \coloneqq \frac{k_ik_j}{k_i+k_j} \proj_{\bbS_+^p} (I-C)$
is optimal under the feasibility of the problem, i.e., $- k_j C \prec k_i I$.
First, we see the feasibility of $X^*$ as follows:
\begin{align*}
    X^*
    \succ
    \frac{k_ik_j}{k_i+k_j} \proj_{\bbS_+^p} \! \Big( {-\frac{k_j}{k_i}} C - C \Big)
    =
    \proj_{\bbS_+^p} (- k_j C)
    \succeq
    - k_j C, \quad
    X^*
    \prec
    \frac{k_ik_j}{k_i+k_j} \proj_{\bbS_+^p} \! \Big( I + \frac{k_i}{k_j} I \Big)
    =
    \proj_{\bbS_+^p} (k_i I)
    =
    k_i I.
\end{align*}
Next, since $g$ is a concave function, the optimality of $X^*$ is equivalent to
\begin{align*}
    \langle \nabla g(X^*), X - X^* \rangle \leq 0\quad
    \text{for all $X \in \bbS_+^p$ such that $- k_j C \prec X \prec k_i I$.}
\end{align*}
One sufficient condition is $\nabla g(X^*) \preceq O$ and $\langle \nabla g(X^*), X^*\rangle = 0$, which we can validate by using
\begin{equation*}
    \nabla g(X) = - \Big( I - \frac{1}{k_i}X\Big)^{-1} + \Big( C+\frac{1}{k_j} X \Big)^{-1},
\end{equation*}
and diagonalizing $C$.
\end{proof}

Next, let us consider optimizing $Y_e$ for $e=(i,j)$ such that $i = 0$ and $j \neq 0$.\footnote{Note that $G$ does not have edge $e = (i, j)$ such that $i \neq 0$ and $j = 0$ under the assumption that $G$ is a DAG and that there exists a path from vertex $0 \in V$ to every other vertex.}
Let $B$ be defined by \eqref{eq:def_A_B}, and then the subproblem can be written as
\begin{equation}
    \label{sub_opt_problem_0j}
    \max _{Y_e \in \bbS_+^p}\ 
    - \trace \prn*{Y_e \Gamma^{-1}} 
    + k_j \log \det \Big(B + \frac{1}{k_j} Y_e \Big),
    \quad 
    \text{s.t. $- k_j B \prec Y_e$.}
\end{equation}
We can also write down the optimal solution to this problem as follows.
\begin{proposition}
    The optimal solution to the problem \eqref{sub_opt_problem} is given by
    \begin{equation*}
        Y_e = k_j \Gamma^{1/2} \proj_{\bbS_+^p} \Big( I - \Gamma^{-1/2} B \Gamma^{-1/2} \Big) \Gamma^{1/2}.
    \end{equation*}
\end{proposition}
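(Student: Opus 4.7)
The plan is to mirror the technique used in Proposition~\ref{prop:subproblem_optimum_1}: reduce the problem to a canonical form via the substitution $X := \Gamma^{-1/2} Y_e \Gamma^{-1/2}$ and $C := \Gamma^{-1/2} B \Gamma^{-1/2}$. Using cyclicity of the trace and multiplicativity of the determinant, the objective becomes, up to an additive constant,
\[
    g(X) := -\trace(X) + k_j \log \det\Big(C + \frac{1}{k_j} X\Big),
\]
with constraints $X \in \bbS_+^p$ and $-k_j C \prec X$. The candidate optimum in the statement then corresponds to $X^\ast := k_j \proj_{\bbS_+^p}(I - C)$, so it suffices to prove that $X^\ast$ solves this reduced problem.

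For feasibility I would diagonalize $C = U \diag(\lambda_1,\dots,\lambda_p) U^\trans$ and observe that $X^\ast + k_j C$ has eigenvalues $k_j(\max(1-\lambda_i, 0) + \lambda_i) = k_j \max(1, \lambda_i) \geq k_j$, so that $X^\ast + k_j C \succeq k_j I \succ O$. In particular $C + \frac{1}{k_j} X^\ast \succ O$, so the gradient
\[
    \nabla g(X) = -I + \Big(C + \frac{1}{k_j} X\Big)^{-1}
\]
is well-defined at $X^\ast$.

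For optimality, concavity of $g$ reduces the task to verifying $\langle \nabla g(X^\ast),\, X - X^\ast\rangle \leq 0$ for every $X \in \bbS_+^p$. In the eigenbasis of $C$, the eigenvalues of $\nabla g(X^\ast)$ are $-1 + 1/\max(1,\lambda_i) \leq 0$, hence $\nabla g(X^\ast) \preceq O$; meanwhile a direct diagonal computation yields $\langle \nabla g(X^\ast), X^\ast \rangle = 0$, since in each eigendirection either $\lambda_i \geq 1$ forces the $X^\ast$-entry to vanish or else $\lambda_i < 1$ forces $\max(1,\lambda_i) = 1$ and so the $\nabla g$-entry to vanish. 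Combining the two facts gives $\langle \nabla g(X^\ast), X\rangle \leq 0$ for every $X \succeq O$, which is the desired first-order condition.

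Conceptually this is the $k_i \to \infty$ limit of Proposition~\ref{prop:subproblem_optimum_1} with $A$ replaced by $\Gamma$ (note that $\tfrac{k_i k_j}{k_i + k_j} \to k_j$), and no genuinely new obstacle appears. The only subtlety is bookkeeping: the linear term $-\trace(Y_e \Gamma^{-1})$ now replaces $k_i \log \det(A - \tfrac{1}{k_i} Y_e)$, which changes the first summand of $\nabla g$ from $-(I - \tfrac{1}{k_i} X)^{-1}$ to simply $-I$ and removes the strict upper-bound constraint $Y_e \prec k_i A$; apart from that, the feasibility check and the KKT verification proceed by exactly the same diagonalization argument.
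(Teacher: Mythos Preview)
Your proof is correct and follows exactly the approach the paper indicates: it adapts the argument of Proposition~\ref{prop:subproblem_optimum_1} via the substitution $X=\Gamma^{-1/2}Y_e\Gamma^{-1/2}$, $C=\Gamma^{-1/2}B\Gamma^{-1/2}$, then verifies feasibility and the first-order optimality condition by diagonalizing $C$. The paper omits the proof precisely because it is this routine modification, and your remark that the result is the formal $k_i\to\infty$ limit (with $A$ replaced by $\Gamma$) is a nice way to see why no new difficulty arises.
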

We omit the proof because the proof is similar to that of Proposition~\ref{prop:subproblem_optimum_1}.

The overall algorithm is shown in Algorithm~\ref{alg:dca1}.
This algorithm can be directly used to solve \eqref{form2}.

\begin{algorithm}
\caption{Dual block coordinate ascent method}\label{alg:dca1}
\begin{algorithmic}[1]
\Require $G = (V, E)$, $(k_i)_{i \in V \setminus \set{0}}$, $(S_i)_{i \in V \setminus \set{0}}$, $\epsilon > 0$
\For {$e\in E$} 
    \Comment{Initialization}
    \State $Y_e \leftarrow O\in \bbR^{p\times p}$
\EndFor
\For {$i \in V \setminus \set{0} $ such that $S_i$ is singular} 
    \Comment{Making $(Y_e)_{e \in E}$ feasible for the dual problem}
    \State Pick up an arbitrary path from vertex $0$ to vertex $i$ on $G$ and update $Y_e \gets Y_e + \epsilon I$ for all $e \in E$ on the path
\EndFor
\Repeat
    \Comment{Main loop}
    \For {$e=(i,j)\in E$ such that $i = 0$}
        \State Set $B$ as in \eqref{eq:def_A_B}
        \State $Y_e \gets k_j \Gamma^{1/2} \proj_{\bbS_+^p} \Big( I - \Gamma^{-1/2} B \Gamma^{-1/2} \Big) \Gamma^{1/2}$
        \Comment{Diagonalization is useful}
    \EndFor
    \For {$e=(i,j)\in E$ such that $i \neq 0$}
        \State Set $A$ and $B$ as in \eqref{eq:def_A_B}
        \State $Y_e \gets \frac{k_ik_j}{k_i+k_j} A^{1/2} \proj_{\bbS_+^p} \Big( I - A^{-1/2} B A^{-1/2} \Big) A^{1/2}$
    \EndFor
\Until{convergence}
\For {$i \in V \setminus \set{0}$} 
    \Comment{Recovering $Q_i$ from $(Y_e)_{e\in E}$}
    \State $Q_i \leftarrow S_i - \frac{1}{k_i} \sum_{e \in E} b_{ie} Y_e$
\EndFor \\
\Return{$(Q_i)_{i\in V}$}
\end{algorithmic}
\end{algorithm}

\section{Numerical tests}

We test the new discretization error model using the algorithm presented above.
Due to the space constraints, we restrict our discussion to results from the discretization error quantification.
To focus on the $\Sigma$-estimation process, we hold the unknowns, $\theta$, in the model differential equation to their true values.
We employ artificially generated observations for the experiments.
These settings are highly unrealistic but still provide a viable framework for probing the characteristics of our proposed approach.

As a toy problem, we consider the Lorenz system
\begin{equation*}
    \frac{\rmd}{\rmd t}
    \begin{bmatrix}
        x_1 \\ x_2 \\ x_3
    \end{bmatrix}
    =
    \begin{bmatrix}
        \sigma (-x_1 + x_2) \\
        x_1 (\rho - x_3) - x_2 \\
        x_1 x_2 - \beta x_3
    \end{bmatrix},
    \quad 
    \begin{bmatrix}
        x_1(0) \\ x_2 (0) \\ x_3 (0)
    \end{bmatrix}
    =
    \begin{bmatrix}
        -10 \\ -1 \\ 40
    \end{bmatrix},
\end{equation*}
where $(\sigma,\rho,\beta) = (10,28,8/3)$.
We designate the observation operator as $\calO(x) = (x_1,x_2,x_3)^\trans$.
The observation noise variance is set to $\Gamma = \mathrm{diag}(0.05^2,0.01^2,0.05^2)$, with observations assumed to be obtained at $t_i = (i-1)h$ with $i=1,2,\dots,300$ and $h=0.05$, indicating $t\in [0,15]$.
In the numerical example, the degrees of freedom for the Wishart distribution is set to 3, i.e. $k_i =3$ in \eqref{eq:partition}.

Fig.~\ref{fig:deq_kepler} illustrates the discretization error quantification results.
By definition,
each estimated $\Sigma_i$ is a $3\times 3$ matrix. 
We projected these results into two dimensions and visualize them by drawing ellipses. 
At each point in time, a pair of ellipses corresponds to the probabilities of $68\%$ and $95\%$.
The results show that a significant correlation between $x_1 $ and $x_2$ is captured, although it may vary significantly as time passes.
The actual errors are also depicted in these figures, with some appearing slightly outside the outer ($95\%$) ellipse.
This behaviour often happens when the error grows sharply, and similar behaviour was reported in our previous report~\cite{mm21}.
Table~\ref{tab:frequency} demonstrates the frequency at which the actual error is encompassed within the ellipses corresponding to $68\%$ and $95\%$ probabilities. 
Upon conducting similar experiments with varying parameters, we observed a pattern: $70$-$90\%$ of the actual errors were typically embraced within the ellipse of $95\%$ probability, and the results for $(x_2,x_3)$ and $(x_3,x_1)$ were almost the same.
Besides, a comparison of ellipses at different time points reveals that the ellipse associated with a larger $t$ value embraces the one with a smaller $t$, indicating that the algorithm preserves the monotonicity constraint.

In future publications, we plan to provide comprehensive applications and detailed analyses for more practical inverse problems.

\begin{figure}[t]
\centering
\begin{tikzpicture}[scale=0.78]
\begin{axis}[
    xmin=-75,xmax=75,
    ymin=-75,ymax=75,
    xtick={-60,-40,...,60},
    ytick={-60,-40,...,60},
    xlabel={error in $x_1$},
    ylabel={error in $x_2$},
    ]
    
    \def\a{7.387050564793316}
    \def\b{2.6662008433739066}
    \def\angle{61.08873591353575}
    \addplot [very thick, color=mycolor1, domain=-360:360, samples=600] ({\a*cos(\x)*cos(\angle) -\b* sin(\x)*sin(\angle)}, {\a*cos(\x)*sin(\angle) + \b*sin(\x)*cos(\angle)});

    \def\a{11.028827220005853}
    \def\b{3.980623684309035}
    \addplot [very thick, color=mycolor1, domain=-360:360, samples=600] ({\a*cos(\x)*cos(\angle) -\b* sin(\x)*sin(\angle)}, {\a*cos(\x)*sin(\angle) + \b*sin(\x)*cos(\angle)});

    \addplot[mycolor1,mark=o,only marks] coordinates{
        (-1.5963346994043484, -2.2895965421164366)
        (-1.637422625617937, -0.9193189062429088)
        (-0.7911590026520017, 1.7894463975797326)};

    \def\a{59.93829173507859}
    \def\b{9.873318126080575}
    \def\angle{50.82380194080455}
    \addplot [very thick, color=mycolor2, domain=-360:360, samples=600] ({\a*cos(\x)*cos(\angle) -\b* sin(\x)*sin(\angle)}, {\a*cos(\x)*sin(\angle) + \b*sin(\x)*cos(\angle)});

    \def\a{89.48755089873713}
    \def\b{14.74081147077424}
    \addplot [very thick, color=mycolor2, domain=-360:360, samples=600] ({\a*cos(\x)*cos(\angle) -\b* sin(\x)*sin(\angle)}, {\a*cos(\x)*sin(\angle) + \b*sin(\x)*cos(\angle)});

    \addplot[mycolor2,mark=o,only marks] coordinates{
        (18.15089280518594, 25.97694545324728)
        (21.37455715112562, 26.01610667603766)
        (22.639208409755064, 23.21395410576235)};
\end{axis}
\end{tikzpicture}
\begin{tikzpicture}[scale=0.78]
\begin{axis}[
    xmin=-75,xmax=75,
    ymin=-75,ymax=75,
    xtick={-60,-40,...,60},
    ytick={-60,-40,...,60},
    xlabel={error in $x_2$},
    ylabel={error in $x_3$},
    ]
    
    \def\a{7.879031133705079}
    \def\b{4.472641872654493}
    \def\angle{72.49054692536554}
    \addplot [very thick, color=mycolor1, domain=-360:360, samples=600] ({\a*cos(\x)*cos(\angle) -\b* sin(\x)*sin(\angle)}, {\a*cos(\x)*sin(\angle) + \b*sin(\x)*cos(\angle)});

    \def\a{11.763351593778005}
    \def\b{6.67763053708703}
    \addplot [very thick, color=mycolor1, domain=-360:360, samples=600] ({\a*cos(\x)*cos(\angle) -\b* sin(\x)*sin(\angle)}, {\a*cos(\x)*sin(\angle) + \b*sin(\x)*cos(\angle)});

    \addplot[mycolor1,mark=o,only marks] coordinates{
        (-2.2895965421164366, -1.634741754586809)
        (-0.9193189062429088, -3.597780757592851)   
        (1.7894463975797326, -3.7851116689144675)};

    \def\a{32.82085713642936}
    \def\b{15.31746295079621}
    \def\angle{-88.88975250567525}
    \addplot [very thick, color=mycolor2, domain=-360:360, samples=600] ({\a*cos(\x)*cos(\angle) -\b* sin(\x)*sin(\angle)}, {\a*cos(\x)*sin(\angle) + \b*sin(\x)*cos(\angle)});

    \def\a{49.00136521270765}
    \def\b{22.868890750296234}
    \addplot [very thick, color=mycolor2, domain=-360:360, samples=600] ({\a*cos(\x)*cos(\angle) -\b* sin(\x)*sin(\angle)}, {\a*cos(\x)*sin(\angle) + \b*sin(\x)*cos(\angle)});

    \addplot[mycolor2,mark=o,only marks] coordinates{
        (25.97694545324728, 13.25769850462524)
        (26.01610667603766, 19.343438083867966)
        (23.21395410576235, 18.577569877989973)};
\end{axis}
\end{tikzpicture}

\begin{tikzpicture}[scale=0.78]
\begin{axis}[
    xmin=-75,xmax=75,
    ymin=-75,ymax=75,
    xtick={-60,-40,...,60},
    ytick={-60,-40,...,60},
    xlabel={error in $x_3$},
    ylabel={error in $x_1$},
    ]
    
    \def\a{7.982077770799678}
    \def\b{2.9417128394256356}
    \def\angle{-10.758884509532049}
    \addplot [very thick, color=mycolor1, domain=-360:360, samples=600] ({\a*cos(\x)*cos(\angle) -\b* sin(\x)*sin(\angle)}, {\a*cos(\x)*sin(\angle) + \b*sin(\x)*cos(\angle)});

    \def\a{11.917199675113125}
    \def\b{4.391961629655625}
    \addplot [very thick, color=mycolor1, domain=-360:360, samples=600] ({\a*cos(\x)*cos(\angle) -\b* sin(\x)*sin(\angle)}, {\a*cos(\x)*sin(\angle) + \b*sin(\x)*cos(\angle)});

    \addplot[mycolor1,mark=o,only marks] coordinates{
        (-1.634741754586809, -1.5963346994043484)
        (-3.597780757592851, -1.637422625617937)
        (-3.7851116689144675, -0.7911590026520017)};

    \def\a{33.566440213019476}
    \def\b{12.581084936705889}
    \def\angle{5.047485192908817}
    \addplot [very thick, color=mycolor2, domain=-360:360, samples=600] ({\a*cos(\x)*cos(\angle) -\b* sin(\x)*sin(\angle)}, {\a*cos(\x)*sin(\angle) + \b*sin(\x)*cos(\angle)});

    \def\a{50.11451678216665}
    \def\b{18.783492923204296}
    \addplot [very thick, color=mycolor2, domain=-360:360, samples=600] ({\a*cos(\x)*cos(\angle) -\b* sin(\x)*sin(\angle)}, {\a*cos(\x)*sin(\angle) + \b*sin(\x)*cos(\angle)});

    \addplot[mycolor2,mark=o,only marks] coordinates{
        (13.25769850462524, 18.15089280518594)
        (19.343438083867966, 21.37455715112562)
        (18.577569877989973, 22.639208409755064)};
\end{axis}
\end{tikzpicture}

\caption{Discretization error quantification results.
The blue markers represent results for time interval $t\in[7.4,7.5]$, corresponding to $\tilde{\Sigma}_{150}$. 
Each plot features two ellipses, representing the $68\%$ and $95\%$ probability levels, respectively. Within the $t\in[7.4,7.5]$ range, there are three discrete time points, with actual errors displayed as circles, which are obtained by comparing the numerical approximation to the reference solution.
The red markers indicate results for the time interval $t\in[14.9,15.0]$
}
\label{fig:deq_kepler}
\end{figure}
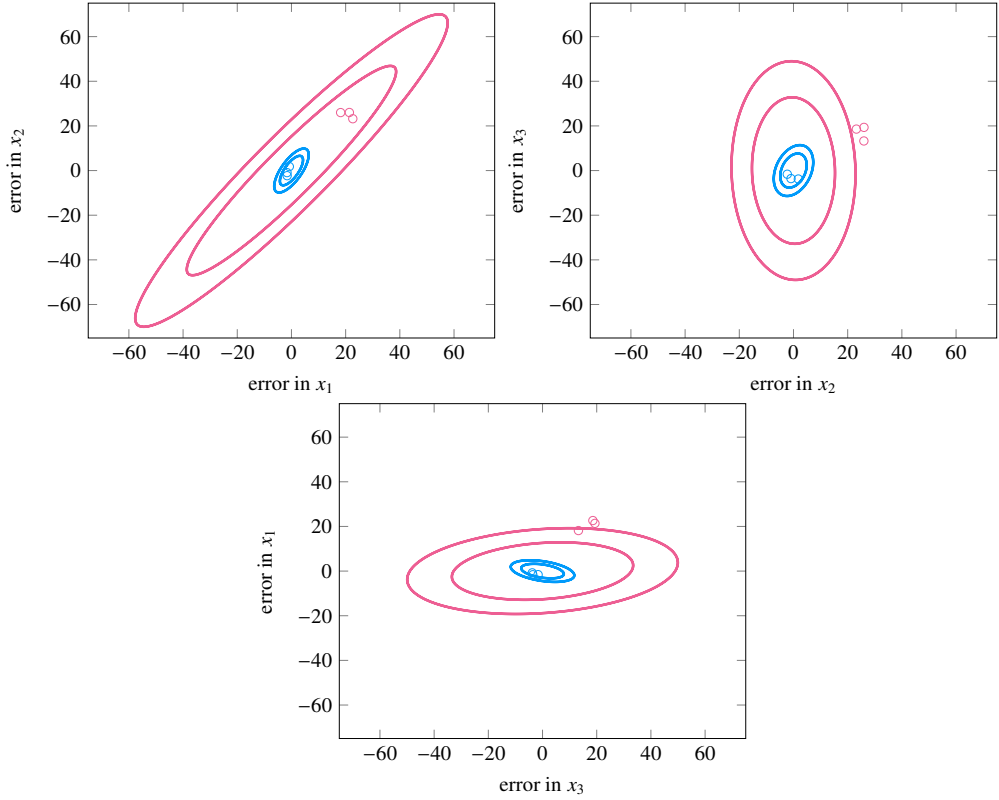

\begin{table}[t]
    \caption{Frequency that the actual error lies in the ellipses corresponding to the $68\%$ and $95\%$ probabilities.
    We note that the results for $i=1,\dots,18$ are disregarded because the estimated $\tilde{\Sigma}$'s were singular.}
    \label{tab:frequency}
    \centering
    \begin{tabular}{c|ccc}
         &  $(x_1,x_2)$ & $(x_2,x_3)$ & $(x_3,x_1)$\\
         \hline
         $68\%$ & $80.0\%$ & $59.3\%$ & $59.8\%$ \\
         $95\%$ & $86.6\%$ & $73.2\%$ & $73.2\%$
    \end{tabular}
\end{table}



 \bibliographystyle{elsarticle-num} 
 \bibliography{ref}





\end{document}